\documentclass[final]{siamltex}
\usepackage[parfill]{parskip}    
\usepackage{amsmath,amssymb,latexsym,enumerate,mathrsfs}
\usepackage{hyperref}
\usepackage{array}
\usepackage{times}
\usepackage[]{mdframed}
\usepackage{framed}
\usepackage{subfig}
\usepackage{graphicx}
\usepackage{hyperref} 
\usepackage{multicol}
\usepackage{framed}
\usepackage{moreverb}
\usepackage{marvosym}
\usepackage{color,soul}
\definecolor{lightblue}{rgb}{.90,.95,1}
\sethlcolor{yellow}

\usepackage{epstopdf}
\DeclareMathOperator*{\argminmax}{arg\,min\,max}

\newtheorem{assumption}{\bf Assumption}

\title{A perspective note on likelihood approximation and inference for complex simulation models using a chain of aggregated normalizing flows}

\author{Getachew K. Befekadu\footnote{\scriptsize Getachew K. Befekadu is with the Department of Electrical \& Computer Engineering, College of Engineering, Physics, and Computing, The Catholic University of America, ~ Washington, DC 20064, USA. E-mail:\,{\tt befekadu@cua.edu}}}

\begin{document}
\maketitle

\renewcommand{\thefootnote}{\arabic{footnote}}

\begin{abstract}
In this work, we present a new perspective on the problem of likelihood approximation within the framework of simulation-based inference that promotes scalable and controllable simulation routines for large-scale data analysis, allows efficient parameter space exploration or smooth interpolation in high-dimensions and, thus, supports valid statistical treatments of hypothesis testings as well as uncertainty quantification. In particular, we consider a chain of $n$-aggregated normalizing flows (i.e., a set of well-structured bijective transformations admitting a sequence of functional composition and aggregating estimated parameters) for likelihood approximation scheme, where a set of upfront replicated observation datasets from the forward complex simulation model pass through the first set of bijective transformations, and then subsequently pass to the other sets of bijective transformations. Here, we assume that, for any $k \in \{1,\,2, \ldots, n\}$, the parameters corresponding to the first $k$ sets of bijective transformations are estimated sequentially, in some sense of optimality, for constructing flexible probability distributions, regardless of the remaining $(n-k)$ sets of bijective transformations. Moreover, our objects of interest are to highlight two complementary mathematical arguments that leverage an informatics-theoretic formalization, based-on empirical likelihood estimators under moment restrictions, and a sequential decision-making paradigm, with a set of mixing distribution strategies, for updating and aggregating the estimated parameters of the overall normalizing flows. Finally, as a by-product, the framework presented in this work also provides a reliable surrogate model, conditioned on the model parameters defining the forward computational simulation, that allows samples generation, with statistical powers, and further facilitates computationally tractable scheme in the Bayesian paradigm for inference, hypothesis testings and uncertainty quantification.
  
\end{abstract}
\begin{keywords} 
Bijective transformation, decision strategy, empirical-likelihood, hypothesis testing, Bayesian inference, mixing distribution, moment restrictions, normalizing flow, simulation model, surrogate model.
\end{keywords}
\begin{AMS} 62G30, 62D20, 62F10, 62F15, 90C46, 90C47, 91B06
\end{AMS}

\section{Introduction} \label{S1}
With the advent of high performance computing and highly expressive programming languages, researchers as well as practitioners from various domains of applied science and engineering are increasingly relying on complex simulation models as an integral part of their broader scientific investigations that include interpreting observation datasets, forecasting system behaviors, quantifying uncertainties, synthesizing plausible hypotheses or making decisions, and optimal experiment designs. Although these complex, and often large-scale, simulation models provide useful information about the underlying real physical systems or experimental processes they represent. It is also acknowledged that, due to the lack of closed form or intractability of the corresponding exact likelihood functions, they are poorly suitable by themselves for statistical inference tasks -- such as assessing the predictive capabilities for exploratory scenarios or estimating model parameters for the underlying inverse problems -- that are required evidently for a rigorous uncertainty quantification on quantities of interest (e.g., see \cite{r1}-\cite{r7} for interesting studies discussing the need of addressing such challenges). As a result of this, developing a viable statistical inference framework in the Bayesian paradigm, that provides reliable solutions for quantities of interest and their uncertainties, while computationally tractable, i.e., admitting computationally efficient and scalable routines for constructing flexible and/or locally approximating target probability distributions, is certainly indispensable for both researchers as well as practitioners working in areas of model verification, validation and uncertainty quantification. 

In this work, without claiming as a one-size-fits-all method for statistical inference, we provide a new perspective on the problem of likelihood approximation within the framework of simulation-based inference that promotes scalable and controllable simulation routines for large-scale data analysis, allows efficient parameter space exploration or smooth interpolation in high-dimensions and, thus, supports valid statistical treatments of hypothesis testings as well as uncertainty quantification. In particular, we consider a chain of $n$-aggregated normalizing flows, i.e., a set of well-structured bijective transformations admitting a sequence of functional composition and aggregating estimated parameters, for likelihood approximation scheme, where a set of upfront replicated observation datasets from the forward complex simulation model pass through the first set of bijective transformations, and then subsequently pass to the other sets of bijective transformations. Moreover, we assume that, for any $k \in \{1,\,2, \ldots, n\}$, the parameters corresponding to the first $k$ sets of bijective transformations are estimated sequentially, in some sense of optimality, for constructing flexible probability distributions, regardless of the remaining $(n-k)$ sets of bijective transformations. 

Moreover, our objects of interest include highlighting the rationale behind two complementary mathematical arguments that leverage an informatics-theoretic formalization, based-on empirical likelihood estimators under moment restrictions, and a sequential decision-making paradigm, with a set of mixing distribution strategies, for updating and aggregating the estimated parameters of the overall normalizing flows. Here, it is worth mentioning that some interesting studies on normalizing flows as a class of generative models for continuous random variables, that provide a mechanism for representing or defining expressive probability distributions, have been reported in machine learning literature (e.g., see \cite{r8}-\cite{r14} for some interesting discussions). Note that the rationale behind our framework follows, in some sense, the settings of these papers, but in a different context. Finally, as a by-product, the framework presented in this work also provides a reliable surrogate model, conditioned on the model parameters defining the forward computational simulation, that allows samples generation, with statistical powers, and further facilitates computationally efficient and tractable scheme in the Bayesian paradigm for inference, hypothesis testings and uncertainty quantification.

\subsection*{Outline} In Section~\ref{S2}, we present background information that are relevant for the exposition of a new perspective on the problem of likelihood approximation and inference within the framework of simulation-based inference. In particular, this section outlines some of the core concepts concerning a chain of $n$-aggregated normalizing flows for likelihood approximation scheme, with the focus of upfront replicated observation datasets from the forward computational simulation, with known parameters sampled from a prior distribution. Here, it is also assumed that the replicated observation datasets from the forward computational simulation pass through the first $k$ set of bijective transformations, and then subsequently pass to the other $(n-k)$ sets of bijective transformations, with $k \in \{1,\, 2, \ldots, n\}$. In Section~\ref{S3}, we present our main results along with our mathematical arguments that leverage an informatics-theoretic formalization and a sequential decision-making paradigm with mixing distributions for updating and aggregating the estimated parameters of the overall normalizing flows. While, such overall normalizing flows support smooth interpolation of parameters in high dimensions, they also provide an extension mechanism for constructing flexible and/or locally approximating target probability distributions that serve as a proxy to the underlying true, but intractable, likelihood functions. Moreover, this section also provides a reliable surrogate model, conditioned on the model parameters defining the forward computational simulation, that allows samples generation, with statistical powers, and further facilitates computationally tractable scheme in the Bayesian paradigm for statistical inference, hypothesis testings and uncertainty quantification. Finally, this section provides comments on model inference tasks, in the parameters setting, as well as quantifying model performances. 

\section{Background and preliminaries} \label{S2}
In this section, we provide background information that are relevant for approximating the likelihood function (that serves as a proxy to the underlying true likelihood function) for the complex simulation model within the framework of simulation-based inference. In particular, the framework considered in this work, which leverages the use of an informatics-theoretic formalization and a sequential decision paradigm with mixing distributions, consists of the following core concepts:
\begin{enumerate} [(i).]
\item {\it Replicated simulation observation datasets as training data}: We assume that we have upfront replicated observation datasets, i.e., $R$ replicated observation datasets of $y_{\theta}^{r}$, for $r=1$, $2$, \ldots, $R$, with the support of $Y \subset \mathbb{R}^{d_y}$, from the forward computational simulation corresponding to a certain physical system or real experimental process; while $\theta$ denotes parameters in the model defining the forward computational simulation. In particular, these replicated observation datasets, each with a data size of $N$, generated by the forward computational simulation, consist of a set of observation datasets $y_{\theta^{(r)}}^{(r)} = \bigl\{y_{\theta^{(r)}, j}^{(r)} \bigr\}_{j=1}^N$ at different sampled parameters values of $\theta^{(r)} \sim \pi(\theta)$, for $r=1$, $2$, \ldots, $R$, from a prior distribution $\pi(\theta)$ on the parameters, with the support of $\Theta \subset \mathbb{R}^{d_{\theta}}$.
\item {\it A chain of aggregated normalizing flows for likelihood approximation scheme}: We consider a chain of $n$ aggregated normalizing flows (i.e., a set of well-structured bijective transformations admitting a sequence of functional composition and aggregating estimated parameters) for likelihood approximation scheme, where the replicated observation datasets from the complex simulation model pass through the first set of bijective transformations $\operatorname{T}_{\phi^{(r, k)}} (\cdot)$, for $r =1$, $2$, \ldots, $R$, with $k=1$, and then subsequently pass to the other sets of bijective transformations $\operatorname{T}_{\phi^{(r, k)}} (\cdot)$, for $r =1$, $2$, \ldots, $R$, with $k \in \{2, \ldots, n\}$. Here, we assume that, for any $k \in \{1,\,2, \ldots, n\}$, the parameters $\phi^{(r, k)} = \bigl\{\alpha^{(r, k)}, \gamma^{(r, k)}\bigr\}$, with the scaling parameters $\alpha^{(r, k)} > 0$ and shifting parameters $\gamma^{(r,k)} \in \mathbb{R}$, for $r=1$, $2$, \ldots, $R$, that correspond to the first $k$ set of bijective transformations are determined, in some sense of optimality to be discussed later, regardless of the remaining $(n-k)$ set of bijective transformations, for constructing flexible probability distributions. Furthermore, by dropping the dependence on the model parameters $\theta$ for sake of clarity, the transformations of the replicated observation datasets will assume the following forms
\begin{align}
z^{(r, k)} &= \operatorname{T}_{\phi^{(r, k)}} \bigl(z^{(r, k-1)}\bigr) \notag\\
              &= \bigl\{z_j^{(r, k-1)} \exp\bigl(\alpha^{(r, k)}\bigr) + \gamma^{(r, k)}\bigr\}_{j=1}^N,  \quad r=1, 2, \ldots, R, \label{Eq2.1}
\end{align}
with $k \in \{1,\, 2, \ldots, n\}$, where $z^{(r, 0)}$ are given by $z^{(r, 0)}=y^{(r)}$, for $r =1$, $2$, \ldots, $R$. Moreover, the parameters aggregation step, that leverages a sequential decision-making paradigm, with a set of mixing distribution strategies, will then proceed with respect to the current estimated parameters $\phi^{(r, k)} = \bigl\{\alpha^{(r, k)}, \gamma^{(r, k)}\bigr\}$, for $r=1$, $2$, \ldots, $R$, corresponding to the $k^{\rm th}$ set of bijective transformations. Note that, by treating the upfront replicated observation datasets, i.e., $\bigl\{\theta^{(r)}, y^{(r)}\bigr\}_{r=1}^R$, as training data, we view the aggregated overall normalizing flows as an instrument for approximating the true likelihood function corresponding to the complex simulation model. 
\item {\it Empirical-likelihood estimators under moment restrictions for parameters estimation scheme}: We consider a family of empirical-likelihood estimators, under moment restrictions, corresponding to the block-wise implementations of the transformed replicated observation datasets of
\begin{align}
z^{(r, 0)}=y^{(r)}, \quad z^{(r, 1)}=\operatorname{T}_{\phi^{(r, 1)}} \bigl(z^{(r,0)}\bigr), \quad \cdots, \quad z^{(r, n)}=\operatorname{T}_{\phi^{(r, n)}} \bigl(z^{(r, n-1)}\bigr), \label{Eq2.2}
\end{align}
where 
\begin{align}
z^{(r, k)} &= \operatorname{T}_{\phi^{(r, k)}} \bigl(z^{(r, k-1)}\bigr) \notag \\
              &= \bigl\{z_j^{(r, k-1)} \exp\bigl(\alpha^{(r, k)}\bigr) + \gamma^{(r, k)}\bigr\}_{j=1}^N, \quad r=1, 2, \ldots, R,  \label{Eq2.3}
\end{align}
with $k \in \{1,\, 2, \ldots, n\}$. That is, for an integer $m$ (with $m < N$), the block-implementations of the transformed replicated observation datasets $\hat{z}_{\ell}^{(r, k)}$ are computed as follows
\begin{align}
 \hat{z}_{\ell}^{(r, k)} = \bigl(z_{\ell}^{(r, k)}, \, z_{\ell+1}^{(r, k)}, \ldots, z_{\ell+m-1}^{(r, k)} \bigr), &\quad \ell=1, \ldots, (N-m+1), \label{Eq2.4}\\
                                                                                                                                                       &\quad r=1, 2, \ldots, R, \notag
\end{align}
with $k \in \{1,\, 2, \ldots, n\}$. Moreover, each of the block-implementations are assumed to provide a basis for retaining the dependence patterns of $z_{\ell}^{(r, k)}$ in each of the blocks $\hat{z}_{\ell}^{(r, k)}$ of length $m$, for $\ell=1$, \ldots, $(N-m+1)$ and further allowing us to compute the corresponding smoothed aggregated (moment) functions, that are appropriate for the empirical-likelihood estimators, under moment restrictions. Note that the reasoning behind this instantiated step is to impose indirect constraints on the distributions $p_{\rm z^{(k,r)}}(z)$ corresponding to the transformed observation datasets $z^{(r, k)}$, i.e.,
\begin{align}
p_{z^{(r, k-1)}} \bigl(z \vert \phi^{(r, k-1)} \bigr) &= p_{z^{(r, k)}} \bigl(\operatorname{T}_{\phi^{(r, k-1)}} (z) \vert \phi^{(r, k)} \bigr) \biggl \vert \operatorname{det} \frac{\partial \operatorname{T}_{\phi^{(r, k)}}(z)} {\partial z} \biggr \vert,& \label{Eq2.5}\\
                      \quad r=1, 2, \ldots, R,& \notag
\end{align}
with $k \in \{1, 2, \ldots, n\}$, where $\operatorname{det} {\partial \operatorname{T}_{\phi^{(r, k)}} (z)}/{\partial z}$, for $r=1$, $2$, \ldots, $R$, with $k \in \{1, 2, \ldots, n\}$, are the determinant for the $d_{y} \times d_{y}$ Jacobian matrices. Furthermore, at the same time, such an instantiated step reasonably allows us to treat the transformed block-wise observation datasets $\hat{z}_{\ell}^{(r, k)}$, for $\ell=1$, \ldots, $(N-m+1)$ and $r=1$, $2$, \ldots, $R$, as random variables arising from discretely distribution populations with each transformed block-wise observation datapoint as a single-cell from a set of finite-cell contingencies. Later, based on this argument, we further establish a connection between the dual solvability conditions of a family optimization problems corresponding to the empirical-likelihood estimators, under moment restrictions, with that of saddle-point optimality conditions for a certain family of {\it minmax} optimization problems. Moreover, such an instantiated step will also allow us to collect all gradient-related information for updating the estimated parameters $\bigl\{\phi^{(r, k)}\bigr\}_{r=1}^R = \bigl\{\alpha^{(r, k)}, \gamma^{(r, k)}\bigr\}_{r=1}^R$, with $k \in \{1,\,2, \ldots, n\}$, of the normalizing flows using first-order gradient methods that are preferable for large-scale applications. Here, the updating schemes, for each iteration time step $t=1$, $2$, \ldots, $T-1$, will follow
\begin{align}
\phi_{t+1}^{(r, k)} = \phi_{t}^{(r, k)} - \delta_t^{(r, k)} \left(\nabla_{\phi} \mathcal{L}^{(r, k)} \bigl(\phi^{(r, k)}\bigr) + \nu_{\phi} \nabla_{\phi} \mathcal{R}_{\phi} \bigl(\phi^{(r, k)} \bigr) \right)\biggl \vert_{\phi_{t}^{(r, k)}}, \label{Eq2.6}\\
                      \quad r=1, 2, \ldots, R,& \notag
\end{align}
with $k \in \{1, 2, \ldots, n\}$, where $\mathcal{L}^{(r, k)}$, for $r=1$, $2$, \ldots, $R$, with $k \in \{1, 2, \ldots, n\}$, are the empirical log-likelihood functions associated with the implied empirical probabilities from the empirical likelihood estimators, under moment restrictions; while $\mathcal{R}_{\phi}$ is the regularization term.
\item{\it A sequential decision-making paradigm with mixing distributions for parameters aggregation scheme}: We consider an aggregation scheme for the current estimated parameters of the bijective transforms. Here, at each iteration time step $t = 0, 1, 2, \ldots, T-1$, we determine a set of mixing distribution strategies
\begin{align}
 \omega_t^{(k)}=\bigl\{\omega_t^{(1, k)}, \omega_t^{(2, k)}, \ldots, \omega_t^{(R, k)}\bigr\}, \label{Eq2.7}
\end{align}
where $\omega_t^{(k, r)} > 0$, for $r=1$, $2$, \ldots, $R$, with $k \in \{1,\,2, \ldots, n\}$, sum to one, i.e., $\sum\nolimits_{r=1}^R \omega_t^{(r, k)} = 1$, that will be used for dynamically apportioning weighting coefficients with respect to the current estimated parameters $\bigl\{\phi_t^{(r, k)}\bigr\}_{r=1}^R = \bigl\{\alpha_t^{(r, k)}, \gamma_t^{(r, k)}\bigr\}_{r=1}^R$ for the $k^{\rm th}$ set of bijective transforms. That is, the aggregation scheme for the current estimated parameters proceeds as
\begin{align}
\bar{\phi}_t^{(r, k)} & = \left\{ \bar{\alpha}_t^{(k)}, \bar{\gamma}_t^{(k)} \right\} \notag\\
                           & = \left\{ \sum\nolimits_{r=1}^R \omega_t^{(r, k)} \alpha_{t}^{(r, k)}, \sum\nolimits_{r=1}^R \omega_t^{(r, k)}\gamma_{t}^{(r, k)}  \right\},  \label{Eq2.8}
\end{align}
for each $k \in \{1, 2, \ldots, n\}$. 

Moreover, our object of interest here includes providing conditions under which such an aggregation scheme will sequentially determine a set of mixing distribution strategies used for aggregating the current estimated parameters for the overall normalizing flows that ultimately leading to optimal parameters estimates, i.e., further allowing smooth interpolation of parameters in high-dimensions and supporting an extension mechanism for constructing flexible probability distributions.
 \end{enumerate}

In the following section (i.e., Section~\ref{S3}), we emphasize that estimating the parameters for the aggregated overall normalizing transforms, i.e.,
\begin{align*}
\phi^{(r, k)} = \bigl\{ \alpha^{(r, k)}, \gamma^{(r, k)} \bigr\} \quad r=1, 2, \ldots, R,
\end{align*}
with $k \in \{1, 2, \ldots, n\}$, is linked with the solutions of the empirical-likelihood estimators, under moment restrictions, where such solutions are also assume to provide indirect constraints on the probability distributions $p_{z^{(r, n)}} \bigl(z \vert \phi^{(r, k)} \bigr)$, for $r=1$, $2$, \ldots, $R$, with $k \in \{2, \ldots, n\}$, corresponding to the transformed replicated observation datasets $z^{(r, k)} = \bigl\{z_j^{(r, k-1)} \exp\bigl(\alpha^{(r, k)}\bigr) + \gamma^{(r, k)}\bigr\}_{j=1}^N$, for $r =1$, $2$, \ldots, $R$, with $k \in \{2, \ldots, n\}$.

\section{Main results} \label{S3}
In this section, we present our main results concerning a chain of aggregated normalizing flows that provides a mechanism for approximating the likelihood function (i.e., serving as a proxy to the underlying true likelihood function) within the framework of simulation-based inference. Moreover, Figure~\ref{Figure:FG1} shows a conceptual workflow that can be implemented with any highly expressive programming language.\footnote{In this work, our intent is to provide the rationale behind our mathematical arguments, rather than considering a specific simulation problem or application.}

\begin{figure}[hbt]
\begin{center}
 \includegraphics[scale=0.75]{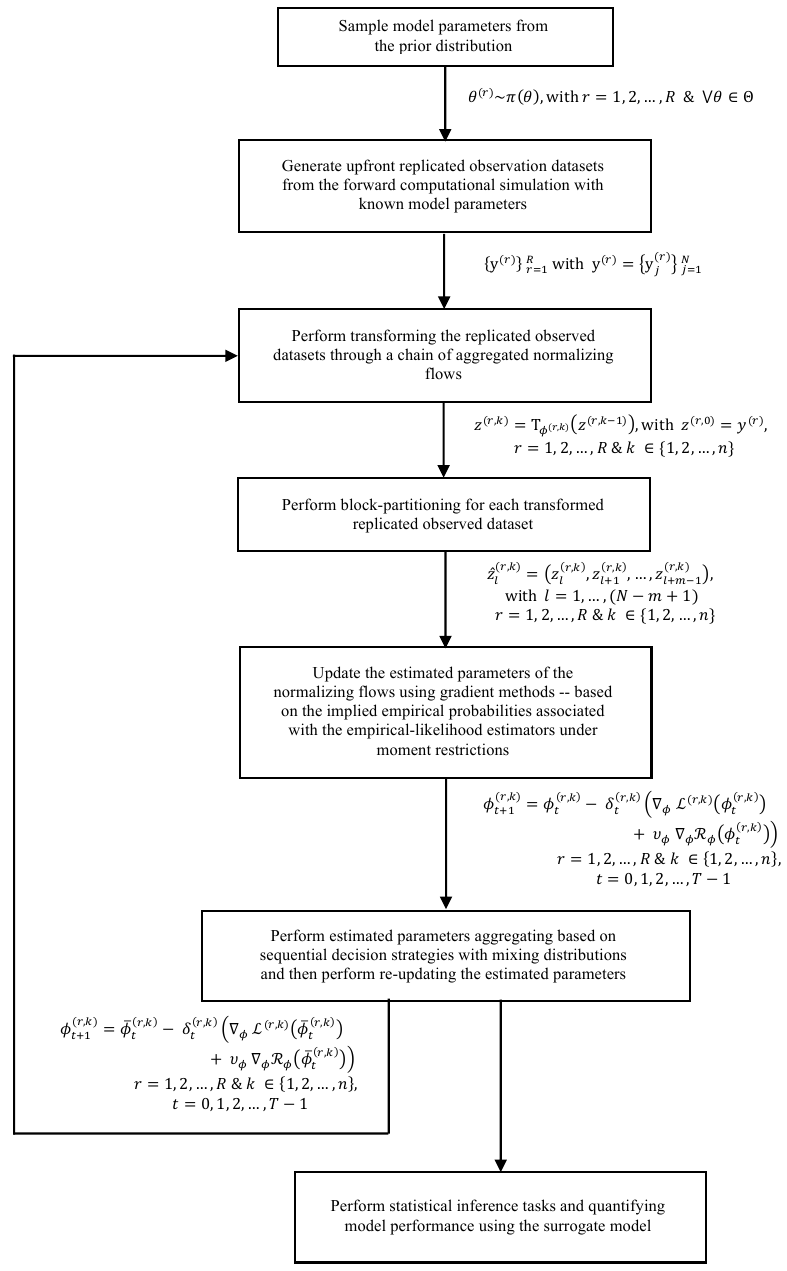}
 \caption{\small A conceptual workflow for approximation likelihood function and inference tasks within the framework of simulation-based inference.} \label{Figure:FG1}
\end{center}
\end{figure}

\subsection{A chain of aggregated normalizing flows for likelihood approximation scheme} \label{S3.1}
In this subsection, we restrict our discussion with regard to a chain of aggregated normalizing flows, i.e., a mapping mechanism linking the replicated observation datasets with that of the model parameters for the complex simulation model. In particular, for $R$ upfront replicated observed datasets from the forward computational simulation for known parameters, we assume that such replicated observation datasets pass through the first set of bijective transformations $\operatorname{T}_{\phi^{(r, k)}} (\cdot)$, for $r =1$, $2$, \ldots, $R$, with $k=1$, and then subsequently pass to the other sets of bijective transformations $\operatorname{T}_{\phi^{(r, k)}} (\cdot)$, for $r =1$, $2$, \ldots, $R$, with $k \in \{2, \ldots, n\}$. Note that, by treating the upfront replicated observation datasets as training data, we view the aggregated overall normalizing flows as an instrument for approximating the likelihood function corresponding to the complex simulation model. Moreover, we assume that,  for $r=1$, $2$, \ldots, $R$, with $k \in \{1,\,2, \ldots, n\}$, the parameters $\phi^{(r, k)} = \bigl\{\alpha^{(r, k)}, \gamma^{(r, k)}\bigr\}$, with the scaling parameters $\alpha^{(r, k)} > 0$ and shifting parameters $\gamma^{(r,k)} \in \mathbb{R}$, that correspond to the first $k$ set of bijective transformations are determined, in some sense of optimality, regardless of the remaining $(n-k)$ set of bijective transformations for constructing flexible probability distributions. Note that, by dropping the dependence on the model parameters $\theta$, the transformations of the replicated observation datasets will assume the following forms
\begin{align}
z^{(r, k)} &= \operatorname{T}_{\phi^{(r, k)}} \bigl(z^{(r, k-1)}\bigr), \notag \\
              &= \bigl\{z_j^{(r, k-1)} \exp\bigl(\alpha^{(r, k)}\bigr) + \gamma^{(r, k)}\bigr\}_{j=1}^N, \quad r=1, 2, \ldots, R, \label{Eq3.1}
\end{align}
with $k \in \{1,\, 2, \ldots, n\}$; while $z^{(r, 0)}$ are given by $z^{(r, 0)}=y^{(r)}$, for $r =1$, $2$, \ldots, $R$. Moreover, the parameters aggregation step, which is based on a sequential decision-making paradigm with a set of mixing distribution strategies, will then proceed with respect to the current estimated parameters $\phi^{(r, k)} = \bigl\{\alpha^{(r, k)}, \gamma^{(r, k)}\bigr\}$, for $r=1$, $2$, \ldots, $R$, corresponding to the $k^{\rm th}$ set of bijective transformations (cf. Subsection~\ref{S3.3}).

Here, we assume that the following assumption holds true.
\begin{assumption} \label{AS1}
For known parameters $\theta$, sampled from the prior distribution $\pi(\theta)$, with the support of $\Theta \subset \mathbb{R}^{d_{\theta}}$, the bijective transformations $\operatorname{T}_{\phi^{(r, k)}}$, parametrized with respect to $\phi^{(r, k)}$ satisfy the following conditions
\begin{align}
p_{z^{(r, k-1)}} \bigl(z \vert \phi^{(r, k-1)} \bigr) = p_{z^{(r, k)}} \bigl(\operatorname{T}_{\phi^{(r, k-1)}} (z) \vert \phi^{(r, k)} \bigr) \biggl \vert \operatorname{det} \frac{\partial \operatorname{T}_{\phi^{(r, k)}}(z)} {\partial z} \biggr \vert,& \label{Eq3.2}\\
                                                r=1, 2, \ldots, R,& \notag
\end{align}
with $k \in \{1, 2, \ldots, n\}$, where $p_{z^{(r, 0)}} \bigl(z \vert \phi^{(r, 0)} \bigr)$, for $r =1$, $2$, \ldots, $R$, represent the unknown complex probability distributions corresponding to the original replicated observation datasets $z^{(r, 0)} = y^{(r)}$, for $r =1$, $2$, \ldots, $R$ and $\operatorname{det} {\partial \operatorname{T}_{\phi^{(r, k)}} (z)}/{\partial z}$ denotes the determinant for the Jacobian matrices. Moreover, the probability distributions $p_{z^{(r, k-1)}}$, for $r =1$, $2$, \ldots, $R$, with $k \in \{2, \ldots, n\}$, correspond to the transformed replicated observation datasets $z^{(r, k-1)}$, whose properties constrained by a sequence of solutions from the family of the empirical-likelihood estimators, under moment restrictions.
\end{assumption}

In the following subsections (i.e., Subsections~\ref{S3.2} and \ref{S3.3}), we present our mathematical arguments for parameters updating and aggregating schemes, that leverage an informatics-theoretic formalization and a sequential decision-making paradigm leading to the optimal estimated parameters $\phi^{\ast} = \bigl\{ \bar{\alpha}_{T}^{(k)},\bar{\gamma}_{T}^{(k)} \bigr\}_{k=1}^n$ for the aggregated overall normalizing flows. Note that a typical estimation scheme for obtaining $\phi^{\ast}$ involves minimizing the Kullback-Leibler (KL) divergence between the true likelihood function $p\bigl(y\vert \theta\bigr)$ and the likelihood approximation $\mathcal{L}_{\rm Approx} \bigl(y_{\rm obs} \vert \theta, \phi^{\ast} \bigr)$ associated with the aggregated overall normalizing flows, conditioned on the model parameters $\theta$ defining the forward computational simulation ( (e.g., see \cite{r15}-\cite{r17} for general discussions on KL divergence). Here, instead of relying on minimizing the KL divergence, we provide parameters updating and aggregating schemes that utilize $R$ upfront replicated observation datasets $\bigl\{\theta^{(r)}, y^{(r)}\bigr\}_{r=1}^R$, with $\theta^{(r)} \sim \pi(\theta)$, for $r=1$, $2$, \ldots, $R$, as training data and further allow us to view the aggregated overall normalizing flows as an instrument for approximating the likelihood function associated with the complex simulation model. 

\subsection{Empirical-likelihood estimators under moment restrictions for parameters estimation scheme} \label{S3.2}
In this subsection, with respect to the block-implementations of the transformed replicated observation datasets $\bigl\{\hat{z}_{\ell}^{(r, k)}\bigr\}_{\ell=1}^{N-m+1}$, for $r=1$, $2$, \ldots, $R$, with $k \in \{1,\, 2, \ldots, n\}$, we consider a family of optimization problems (i.e., the empirical likelihood estimators, under moment restrictions) of the following form
\begin{align}
\max_{\pi_\ell^{(r, k)}, 1\le \ell \le N-m+1; \, \beta^{(r, k)}} &\,\sum\nolimits_{\ell=1}^{N-m+1} \log \pi_\ell^{(r, k)} \label{Eq3.3}\\
&\text{s.t.} \notag\\
& \quad \sum\nolimits_{\ell=1}^{N-m+1} \pi_\ell^{(r, k)} \psi\bigl(\hat{z}_\ell^{(r, k)}, \beta^{(r, k)} \bigr) = 0, \notag\\
& \quad \sum\nolimits_{\ell=1}^{N-m+1} \pi_\ell^{(r, k)} = 1, \notag
\end{align}
where $\pi_\ell^{(r, k)} > 0$, for $\ell=1$, \ldots, $(N-m+1)$, are the implied empirical probabilities satisfying $\sum\nolimits_{\ell=1}^{N-m+1} \pi_\ell^{(r, k)} = 1$, for $r=1$, $2$, \ldots, $R$, with $k \in \{1,\, 2, \ldots, n\}$.Moreover, $\psi(\cdot, \cdot)$ denotes smoothed aggregated (moment) functions and defined as follows
\begin{align}
 \psi \bigl(\hat{z}_\ell^{(r, k)}, \beta^{(k,r)} \bigr) = \frac{1}{m} \sum\nolimits_{s=0}^{m-1} g\bigl(z_{\ell+s}^{(r, k)}, \beta^{(r, k)} \bigr), \quad \ell = 1, \ldots, (N-m+1),& \label{Eq3.4}\\
  r=1, 2, \ldots, R,& \notag                     
\end{align}
with $k \in \{1,\, 2, \ldots, n\}$, where $\beta^{(r, k)} \in \mathbf{B}$, with compact set $\mathbf{B} \subset \mathbb{R}^{d_{\beta}}$, for $r=1$, $2$, \ldots, $R$, with $k \in \{1,\, 2, \ldots, n\}$, and $g \colon \mathbb{R}^{d_y} \times \mathbf{B} \to \mathbb{R}^{d_g}$ is a measurable function and typically interpreted as a vector of residuals for modeling of moment restrictions (e.g., see \cite{r18}-\cite{r21} for additional discussions).

In what follows, we assume the following statements that are necessary for the solvability of the family of optimization problems in Equation~\eqref{Eq3.3} (e.g., see \cite{r20} for additional discussions).
\begin{assumption}\label{AS2}
\begin{enumerate} [(1).]
\item There exists a unique solution $\beta_{0} \in \operatorname{Int}(\mathbf{B})$ satisfying $E\bigl(\psi\bigl(z,\beta_{0}\bigr)\bigr) = 0 $, for $z$ with the support of $\mathbb{R}^{d_y}$.
\item The function $\psi\bigl(z,\beta \bigr)$ is continuous at each $\beta \in \mathbf{B}$ with probability one. Moreover, the following conditions hold true:
\begin{enumerate} [$~~~$(i).]
\item $E \bigl(sup_{\beta \in \mathbb{B}} \bigl\Vert \psi\bigl(z,\beta \bigr) \bigl \Vert^{\alpha} \bigr) < \infty$ for some $\alpha > 2$,
\item $E \bigl(sup_{\beta \in N_{\beta_{0}}} \bigl\Vert \nabla_{\beta} \psi\bigl(z,\beta \bigr) \bigl \Vert \bigr) < \infty$, where $N_{\beta_{0}} \subset \mathbf{B}$ denotes the neighborhood of $\beta_{0}$,
\item $\Omega = E \bigl( \psi\bigl(z,\beta\bigr) \psi\bigl(z,\beta\bigr)^T\bigr)\bigl\vert_{\beta= \beta_{0}}$ is a nonsingular matrix, and
\item$\operatorname{rank}\,E \bigl(\nabla_{\beta} \psi\bigl(z,\beta \bigr) \bigr) = d_{\beta}$,
\end{enumerate}
for $z$, with the support of $\mathbb{R}^{d_y}$.
\end{enumerate}
\end{assumption}
With a standard Lagrangian argument, the above family of optimization problems in Equation~\eqref{Eq3.3} can be further transformed to a set of non-constrained optimization problems as follows
\begin{align}
\mathcal{J}_1\bigl(\beta^{(r, k)}, \pi^{(r, k)}, \lambda^{(r, k)}, \eta^{(r, k)} \bigr) & = \sum\nolimits_{\ell=1}^{N-m+1} \log \pi_\ell^{(r, k)} \notag\\
  & - (N-m+1) (\lambda^{(r, k)})^T\sum\nolimits_{\ell=1}^{N-m+1} \pi_\ell^{(r, k)} \psi\bigl(\hat{z}_\ell^{(r, k)} \beta^{(r, k)}  \bigr) \notag\\
  & \quad \quad - \eta^{(r, k)} \left(\sum\nolimits_{\ell=1}^{N-m+1} \pi_\ell^{(r, k)} - 1 \right), \label{Eq3.5}\\
   & \quad \quad \quad \quad r=1, 2, \ldots, R, \notag  
\end{align}
with $k \in \{1,\, 2, \ldots, n\}$, where $\lambda^{(r, k)} \in \mathbb{R}^{d_g}$ and $\eta^{(r, k)} \in \mathbb{R}$ are, respectively, the Lagrange multipliers associated with the constraints in the optimization problems of Equation~\eqref{Eq3.3}. Moreover, suppose that Assumption~\ref{AS2} holds true. Then, for some fixed values of $\hat{\beta}^{(r, k)} \in \operatorname{Int}(\mathbf{B})$, for $r=1$, $2$, \ldots, $R$, with $k \in \{1, 2, \ldots, n\}$, using the envelop theorem (e.g., see \cite{r22} and \cite{r23} for additional discussions), it is easy to see that, from the KKT's optimality conditions (e.g., see also \cite{r24}), the implied empirical probabilities $\hat{\pi}_\ell^{(r, k)} > 0$, for $\ell = 1$, \ldots, $(N-m+1)$, satisfy the following primal and dual feasibility conditions
\begin{equation}
\text{Primal Feasibility:} \quad \quad \left\{ \begin{matrix} 
 \left(\sum\nolimits_{\ell=1}^{N-m+1} \hat{\pi}_\ell^{(r, k)} - 1 \right)= 0 \quad \quad \quad \quad\vspace{2mm} \\
 (N-m+1) \sum\nolimits_{\ell=1}^{N-m+1} \hat{\pi}_\ell^{(r, k)} \psi\bigl(\hat{z}_\ell^{(r, k)}, \hat{\beta}^{(r, k)} \bigr) = 0 
\end{matrix}\right. \label{Eq3.6}
\end{equation}
and 
\begin{align}
\text{Dual Feasibility:} \quad \left\{\frac{1}{\hat{\pi}_\ell^{(r, k)}} - (N-m+1) (\hat{\lambda}^{(r, k)})^T \psi\bigl(\hat{z}_\ell^{(r, k)},\hat{\beta}^{(r, k)} \bigr) - \hat{\eta}^{(r, k)} = 0 \vspace{2mm}  \right.  \notag \\
 \ell = 1,\ldots, (N-m+1), \label{Eq3.7}
\end{align}
for $r =1$, $2$, \ldots, $R$, with $k \in \{1, 2, \ldots, n\}$; while $\hat{\lambda}^{(r, k)}$ and $\hat{\eta}^{(r, k)}$ are the Lagrange multipliers associated with the constraints 
\begin{align*}
\sum\nolimits_{\ell=1}^{N-m+1} \hat{\pi}_\ell^{(r, k)} \psi \bigl(\hat{z}_\ell^{(r, k)},\hat{\beta}^{(r, k)} \bigr) = 0 \quad \text{and} \quad \sum\nolimits_{\ell=1}^{N-m+1} \hat{\pi}_\ell^{(r, k)} = 1,
\end{align*}
respectively.

Note that, if we multiply Equation~\eqref{Eq3.7} on both sides by $\hat{\pi}_\ell^{(k,r)}$ and sum over all $\ell$ (ranging from $\ell=1$ to $(N-m+1)$), i.e.,
\begin{align}
 \sum\nolimits_{j=1}^{N-m+1} \hat{\pi}_j^{(r, k)} \left(\frac{1}{\hat{\pi}_\ell^{(r, k)}} - (N-m+1) (\hat{\lambda}^{(r, k)})^T \psi\bigl(\hat{z}_\ell^{(r, k)}, \hat{\beta}^{(r, k)} \bigr)  - \hat{\eta}^{(r, k)} \right) = 0. \label{Eq3.8}
\end{align}
Then, we will have the following conditions
\begin{align}
 \hat{\eta}^{(r, k)} = - (N-m+1) \label{Eq3.9}
\end{align}
and 
\begin{align}
\hat{\pi}_\ell^{(r, k)} = \frac{1}{(N-m+1) \bigl(1 + (\hat{\lambda}^{(r, k)})^T \psi\bigl(\hat{z}_\ell^{(r, k)}, \hat{\beta}^{(r, k)} \bigr)\bigr)}, \quad \ell=1, \ldots, (N-m+1), \label{Eq3.10}
\end{align}
for all $r =1$, $2$, \ldots, $R$, with $k \in \{1,\, 2, \ldots, n\}$, that further imply the Lagrange multiplier $\hat{\lambda}^{(r, k)}$ satisfies the following condition
\begin{align}
\sum\nolimits_{\ell=1}^{N-m+1} \hat{\pi}_\ell^{(r, k)}  &= \sum\nolimits_{\ell=1}^{N-m+1}\frac{1}{(N-m+1) \bigl(1 +  (\hat{\lambda}^{(r, k)})^T \psi\bigl(\hat{z}_\ell^{(r, k)}, \hat{\beta}^{(r, k)}\bigr)\bigr)} \notag \\
                                           & = 1, \label{Eq3.11}
\end{align}
that is, confirming the sum of the implied empirical probabilities is unity. Moreover, for each $\hat{\beta}^{(r, k)} \in \operatorname{Int}(\mathbf{B})$, for $r =1$, $2$, \ldots, $R$, with $k \in \{1,\, 2, \ldots, n\}$, the Lagrange multipliers $\hat{\lambda}^{(r, k)}$ also solve the following system of equations
\begin{align}
 \sum\nolimits_{\ell=1}^{N-m+1} \frac{\psi\bigl(\hat{z}_\ell^{(r, k)}, \hat{\beta}^{(r, k)} \bigr)}{1 + (\hat{\lambda}^{(r, k)})^T \psi\bigl(\hat{z}_\ell^{(r, k)},\hat{\beta}^{(r, k)}\bigr)} = 0, \label{Eq3.12}
 \end{align}
for $r =1$, $2$, \ldots, $R$, with $k \in \{1,\, 2, \ldots, n\}$. Note that, if the first-order KKT conditions for the dual variables $\hat{\lambda}^{(r, k)}$, for $r =1$, $2$, \ldots, $R$, with $k \in \{1,\, 2, \ldots, n\}$, hold true. Then, the corresponding implied empirical probabilities will further satisfy the following sample moment restrictions 
\begin{align}
\sum\nolimits_{\ell=1}^{N-m+1} \hat{\pi}_\ell^{(r, k)} \psi\bigl(\hat{z}_\ell^{(r, k)}, \hat{\beta}^{(r, k)} \bigr) = 0. \label{Eq3.13}
\end{align}
for $r =1$, $2$, \ldots, $R$, with $k \in \{1,\, 2, \ldots, n\}$. Moreover, with a slight abuse of notation, noting that $\hat{\beta}^{(r, k)}$ and $\hat{\lambda}^{(r, k)}$ continuously depend on the parameters $\alpha^{(k, r)}$ and $\gamma^{(k, r)}$, the corresponding empirical log-likelihood functions $\mathcal{L}^{(r, k)}\bigl(\alpha^{(k, r)}, \gamma^{(r, k)}\bigr)$ can then be recovered from the implied empirical probabilities as follows
\begin{align}
\mathcal{L}^{(r, k)}\bigl(\alpha^{(r, k)}, \gamma^{(r, k)}\bigr) &= \sum\nolimits_{\ell=1}^{N-m+1} \log \hat{\pi}_\ell^{(r, k)} \notag\\
                                                        &= - \log (N-m+1) \notag\\
                                                        & \quad\quad\quad - \sum\nolimits_{\ell=1}^{N-m+1} \log \left(1 + (\hat{\lambda}^{(r, k)})^T \psi\bigl(\hat{z}_\ell^{(r, k)}, \hat{\beta}^{(r, k)} \bigr)\right), \label{Eq3.14}
\end{align}
for $r =1$, $2$, \ldots, $R$, with $k \in \{1,\, 2, \ldots, n\}$. 

Here, we further establish a connection between the dual solvability conditions for the empirical-likelihood estimators, under moment restrictions, with that of the saddle-point optimality conditions for a family of {\it minmax} optimization problems. Note that, from Assumption~\ref{AS2}, if the function $\psi(\cdot, \beta)$ is continuously differentiable with respect to $\beta \in \mathbf{B}$, then the saddle-point optimality conditions for the following family of {\it minmax} optimization problems
\begin{align}
\bigl(\hat{\beta}^{(r, k)}, \hat{\lambda}^{(r, k)} \bigr) \in \argminmax_{~~\beta^{(r, k)} \in \mathbf{B},\, \lambda^{(r, k)} \in \mathbb{R}^{d_g}} \sum\nolimits_{\ell=1}^{N-m+1} \rho \left( (\lambda^{(r, k)})^T \psi\bigl(\hat{z}_\ell^{(r, k)}, \beta^{(r, k)}\bigr)\right), \label{Eq3.15}
\end{align}
for $r =1$, $2$, \ldots, $R$, with $k \in \{1,\, 2, \ldots, n\}$, where $\rho(v) = \log(1+v)$, with $v \in \mathbf{V} \equiv (-1, \infty)$, will coincide with that of the dual feasibility conditions in Equation~\eqref{Eq3.17} (e.g., see \cite{r25} for some related discussions; see also \cite{r26} and \cite{r27} for general discussions on bilevel optimization problems). That is, the implied empirical probabilities are given by
\begin{align}
\hat{\pi}_\ell^{(r, k)} = \frac{1}{(N-m+1) \left(1 + (\hat{\lambda}^{(r, k)})^T \psi\bigl(\hat{z}_\ell^{(r, k)}, \hat{\beta}^{(r, k)} \bigr)\right)}, \quad \ell=1, \ldots, (n-m+1), \label{Eq3.16}
\end{align}
for $r =1$, $2$, \ldots, $R$, with $k \in \{1,\, 2, \ldots, n\}$. Moreover, noting that $\hat{\beta}^{(r, k)}$ and $\hat{\lambda}^{(r, k)}$ continuously depend on the parameters $\alpha^{(k, r)}$ and $\gamma^{(k, r)}$, then the empirical log-likelihood functions $\mathcal{L}^{(r, k)}\bigl(\alpha^{(r, k)}, \gamma^{(r, k)}\bigr)$ can be recovered as follows
\begin{align}
\mathcal{L}^{(r, k)}\bigl(\alpha^{(r, k)}, \gamma^{(r, k)}\bigr) &= - \log (N-m+1) \notag \\
& \quad \quad \quad - \sum\nolimits_{\ell=1}^{N-m+1} \log \left(1 + (\hat{\lambda}^{(r, k)})^T \psi\bigl(\hat{z}_\ell^{(r, k)}, \hat{\beta}^{(r, k)} \bigr)\bigr)\right), \label{Eq3.17}
\end{align}
for $r =1$, $2$, \ldots, $R$, with $k \in \{1,\, 2, \ldots, n\}$. Note that such an instantiated step will allows us to collect all gradient-related information for updating the estimated parameters $\bigl\{\phi^{(r, k)}\bigr\}_{r=1}^R = \bigl\{\alpha^{(r, k)}, \gamma^{(r, k)}\bigr\}_{r=1}^R$, with $k \in \{1,\,2, \ldots, n\}$, associated with the overall normalizing flows using first-order (stochastic) gradient methods. That is, starting from some initial values for $\alpha_0^{(r, k)}$ and $\gamma_0^{(r, k)}$, the updating schemes for the estimated parameters will proceed, for each iteration time step $t=1$, $2$, \ldots, $T-1$, as follows 
\begin{align}
\alpha_{t+1}^{(r, k)} = \alpha_{t}^{(r, k)} - \delta_t^{(r, k)} &\biggl(\nabla_{\alpha^{(r, k)}} \mathcal{L}^{(r, k)}\bigl(\alpha^{(r, k)}, \gamma^{(r, k)} \bigr) \notag\\
                                                               & \quad \quad+ \nu_{\alpha} \nabla_{\alpha^{(r, k)}}\mathcal{R}_{alpha}\bigl(\alpha^{(r, k)}, \gamma^{(r, k)}\bigr) \biggr) \biggl \vert_{\bigl(\alpha_t^{(r, k)}, \gamma_t^{(r, k)} \bigr)} \label{Eq3.18}
\end{align}
and
\begin{align}
\gamma_{t+1}^{(r, k)} = \gamma_{t}^{(r, k)} - \delta_t^{(r, k)} &\biggl(\nabla_{\gamma^{(r, k)}} \mathcal{L}^{(r, k)}\bigl(\alpha^{(r, k)}, \gamma^{(r, k)} \bigr) \notag\\
                                                               & \quad \quad+ \nu_{\gamma} \nabla_{\gamma^{(r, k)}}\mathcal{R}_{\gamma}\bigl(\alpha^{(r, k)}, \gamma^{(r, k)}\bigr) \biggr) \biggl \vert_{\bigl(\alpha_t^{(r, k)}, \gamma_t^{(r, k)} \bigr)} \label{Eq3.19}
\end{align}
where $\delta_t^{(r, k)}$ is the step size, $\mathcal{R}_{\alpha}\bigl(\alpha^{(r, k)}, \gamma^{(r, k)}\bigr)$ and $\mathcal{R}_{\gamma}\bigl(\alpha^{(r, k)}, \gamma^{(r, k)}\bigr)$ are the regularization terms, with $\nu_{\alpha} > 0$ and $\nu_{\gamma} > 0$. Note that, for any $k \in \{1,\,2, \ldots, n\}$, the above updating schemes allow us to estimate sequentially the parameters corresponding to the first $k$ sets of bijective transformations, in some sense of optimality, regardless of the remaining $(n-k)$ sets of bijective transformations. 

\subsection{A sequential decision-making paradigm with mixing distribution strategies for parameters aggregation scheme} \label{S3.3}
In this subsection, we specifically consider a decision-making paradigm with dynamic allocation scenarios that can be viewed as an extension mechanism for constructing flexible probability distributions as well as supporting smooth interpolation of parameters in high-dimensions. Here, at each iteration time step $t = 0, 1, 2, \ldots, T-1$, we decide on a set of mixing distribution strategies $\omega_t^{(r, k)}=\bigl\{\omega_t^{(1, k)}, \omega_t^{(2, k)}, \ldots, \omega_t^{(R, k)} \bigr\}$ over strategies, with $\omega_t^{(r, k)} > 0$, for $r =1$, $2$, \ldots, $R$, with $k \in \{1,2,\ldots, n\}$, and sum to one, i.e., $\sum\nolimits_{r=1}^R \omega_t^{(r, k)} = 1$, for all $k \in \{1,2,\ldots, n\}$, that will be used for dynamically apportioning weighting coefficients with respect to the current parameters estimates for the $k^{\rm th}$ set of bijective transforms. That is, with respect to the current parameters estimates $\bigl\{\phi^{(r, k)}\bigr\}_{r=1}^R = \bigl\{\alpha_{t}^{(r, k)}, \gamma_{t}^{(r, k)} \bigr\}_{r=1}^R$, with $k \in \{1,2,\ldots, n\}$,  we associate a set of risk factor measures $\rho_t^{(r, k)} \in [0,\,1]$, for  $r=1$, $2$, \ldots, $R$, with $k \in \{1,\, 2, \ldots, n\}$, based an exponential function, as follows
\begin{align}
\rho_t^{(r, k)} = 1 - \exp \left( - \zeta \bigl \Vert \Delta_t^{(r, k)} \bigr \Vert^{2} \right), \quad \zeta > 0, \label{Eq3.20}
\end{align}
for $t = 0, 1, 2, \ldots, T-1$, where the prediction errors $\Delta_t^{(r, k)}$, for $r =1$, $2$, \ldots, $R$, with $k \in \{1,2,\ldots, n\}$, are determined as follows
\begin{align}
\Delta_t^{(r, k)} =  \operatorname{T}_{\phi_t^{(r, k)}}^{-1} \bigl(z^{(r, k)}\bigr) - z^{(r, k-1)} \label{Eq3.21}
\end{align}
Note that the average (or mixture) loss, at each iteration time step $t = 0$, $1$,  $2$, \ldots, $T-1$, is given by  
\begin{align}
 \mathcal{J}_t^{(k)} = \sum\nolimits_{r=1}^R \omega_t^{(r, k)} \rho_t^{(r, k)}, \label{Eq3.22}
\end{align}
for $k \in \{1,2,\ldots, n\}$, with $k \in \{1,2,\ldots, n\}$. Then, our interest is to present an aggregation algorithm, that has a decision-theoretic interpretation with dynamic allocation scenarios, which also guarantees an upper bound for the sub-total mixture loss $\mathcal{J}_{\rm Total}^{(k)}$, i.e.,
\begin{align}
 \min \to \mathcal{J}_{\rm Total}^{(k)} &= \sum\nolimits_{t=0}^{T-1} \mathcal{J}_t^{(k)} \notag \\
                 &= \sum\nolimits_{t=0}^{T-1}\sum\nolimits_{r=1}^R \omega_t^{(r, k)} \rho_t^{(r, k)}, \label{Eq3.23}
\end{align}
for $k \in \{1,2,\ldots, n\}$. 

Moreover, such an aggregation algorithm also ensures the following additional properties:
\begin{enumerate} [(i)]
 \item $\rho_t^{(r, k)} $ tends to zero as $t \to \infty$ for $r=1$, $2$, \ldots, $R$ and for all $k \in \{1,2, \ldots, n \}$.
\item 
\begin{align*}
 \bigl\{\bar{\phi}_{T}^{(k)} \bigr\}_{k=1}^n &= \bigl\{ \bar{\alpha}_{T}^{(k)}, \bar{\gamma}_{T}^{(k)} \bigr\}_{k=1}^n\\
                                 &= \left\{\sum\nolimits_{r=1}^R \omega_T^{(r, k)} \alpha_{T}^{(r, k)}, \sum\nolimits_{r=1}^R \omega_T^{(r, k)} \gamma_{T}^{(r, k)} \right\}_{k=1}^n
\end{align*}
tends to the optimal estimated parameters $\phi^{\ast}$ for the overall aggregated normalizing flows.
\end{enumerate}
In order to accomplish the above two properties, we use a simple dynamic allocation strategy coupled with an iterative update scheme for computing the mixing distribution strategies $\omega_t^{(r, k)}$, i.e.,
\begin{align}
\omega_t^{(r, k)} = \frac{\mu_t^{(r, k)}}{\sum\nolimits_{r=1}^R \mu_t^{(r, k)}}, \quad r=1, 2, \ldots, R, \label{Eq3.24}
\end{align}
for $k \in \{1,2,\ldots, n\}$, with $k \in \{1,2,\ldots, n\}$; while, at each iteration time step $t = 0, 1, 2, \ldots, T-1$, the weighting coefficients are updated according to
\begin{align}
\mu_{t+1}^{(r, k)}= \mu_t^{(r, k)} \exp \bigl( \rho_t^{(r, k)} \log \varepsilon \bigr), \quad \varepsilon \in (0,1), \label{Eq3.25}
\end{align}
for $k \in \{1,2,\ldots, n\}$, with $k \in \{1,2,\ldots, n\}$. Moreover, we can assign arbitrary values for the initial weights $\mu_0^{(r, k)}$, for $r=1$, $2$, \ldots, $R$, with $k \in \{1,2, \ldots, n \}$,  but they must be nonnegative and sum to one.

Here, we remark that, for any $k \in \{1,\,2, \ldots, n\}$, the parameters corresponding to the first $k$ sets of bijective transformations are optimally determined and sequentially aggregated, regardless of the remaining $(n-k)$ sets of bijective transformations. Moreover, we state the following proposition that provides an upper bound for the overall total mixture loss, i.e., $\mathcal{J}_{\rm Total} = \sum\nolimits_{k=1}^n \mathcal{J}_{\rm Total}^{(k)}$ (cf. \cite[Proposition~1]{r28}).

\begin{proposition} \label{P1}
Let $\mathcal{J}_t^{(k)} = \sum\nolimits_{r=1}^R \omega_t^{(r, k)} \rho_t^{(r, k)}$, $t = 0$, $1$, $2$, \ldots, $T-1$, be a sequence of losses associated with a set of mixing distribution strategies $\omega_t^{(k)} = \bigl\{\omega_t^{(1, k)}, \omega_t^{(2, k)}, \ldots, \omega_t^{(R, k)}\bigr\}$ and risk factor measures $\rho_t^{(r, k)}$, for $r=1$, $2$,\ldots, $R$, with $k \in \{1,2, \ldots, n \}$. Then, the overall total mixture loss $\mathcal{J}_{\rm Total}$, i.e., 
\begin{align}
\mathcal{J}_{\rm Total} &= \sum\nolimits_{k=1}^n \mathcal{J}_{\rm Total}^{(k)} \notag\\
                                     &= \sum\nolimits_{k=1}^n \sum\nolimits_{t=0}^{T-1}\sum\nolimits_{r=1}^R  \omega_t^{(r, k)} \rho_t^{(r, k)}, \label{Eq3.26}
\end{align}
 satisfies the following upper bound condition
\begin{align}
\mathcal{J}_{\rm Total} \le -\frac{1}{1-\varepsilon} \sum\nolimits_{k=1}^n \log \left( \sum\nolimits_{r=1}^R \mu_{T}^{(r, k)} \right). \label{Eq3.27}
\end{align}
\end{proposition}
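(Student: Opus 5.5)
This is the Hedge / multiplicative-weights potential argument of Freund--Schapire (cf. \cite[Proposition~1]{r28}). The plan is to track the potential $W_t^{(k)} = \sum\nolimits_{r=1}^R \mu_t^{(r, k)}$ for each fixed $k \in \{1,2,\ldots,n\}$, bound its one-step decrease in terms of the mixture loss $\mathcal{J}_t^{(k)}$, telescope over $t$, and finally sum over $k$. The starting value is $W_0^{(k)} = 1$, since the initial weights $\mu_0^{(r,k)}$ are nonnegative and sum to one.

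\textbf{Step 1 (one-step bound).} The update \eqref{Eq3.25} reads $\mu_{t+1}^{(r, k)} = \mu_t^{(r, k)} \varepsilon^{\rho_t^{(r, k)}}$. Since $\rho_t^{(r,k)} \in [0,1]$ by \eqref{Eq3.20}, convexity of $x \mapsto \varepsilon^x$ gives $\varepsilon^{x} \le 1 - (1-\varepsilon)x$ on $[0,1]$. Summing over $r$ and using \eqref{Eq3.24}, we get
\begin{align*}
W_{t+1}^{(k)} \le \sum\nolimits_{r=1}^R \mu_t^{(r, k)} \bigl(1 - (1-\varepsilon)\rho_t^{(r, k)}\bigr) = W_t^{(k)} \bigl(1 - (1-\varepsilon)\mathcal{J}_t^{(k)}\bigr).
\end{align*}

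\textbf{Step 2 (telescoping).} Iterating Step 1 from $t=0$ to $T-1$ and using $W_0^{(k)}=1$ gives $W_T^{(k)} \le \prod_{t=0}^{T-1}\bigl(1-(1-\varepsilon)\mathcal{J}_t^{(k)}\bigr)$. Applying $1+x \le e^{x}$ to each factor yields
\begin{align*}
W_T^{(k)} \le \exp\Bigl(-(1-\varepsilon)\sum\nolimits_{t=0}^{T-1}\mathcal{J}_t^{(k)}\Bigr).
\end{align*}
Taking logarithms, this is equivalent to $\mathcal{J}_{\rm Total}^{(k)} \le -\frac{1}{1-\varepsilon}\log W_T^{(k)}$.

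\textbf{Step 3 (summing over flow stages).} Summing the bound of Step 2 over $k=1,\ldots,n$ gives \eqref{Eq3.27} directly by \eqref{Eq3.26}. The weight recursions for different $k$ are decoupled, so no interaction terms arise.

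The only delicate point is Step 1. It needs each factor $1-(1-\varepsilon)\mathcal{J}_t^{(k)}$ to be positive, so that the product bound in Step 2 and the logarithm are legitimate. This holds because $\mathcal{J}_t^{(k)} \le 1$, a convex combination of numbers in $[0,1]$, and $\varepsilon>0$, so each factor is at least $\varepsilon > 0$. Positivity of $W_T^{(k)}$ also holds, because all $\mu_t^{(r,k)}$ stay positive whenever at least one initial weight is positive. The rest of the argument is routine.
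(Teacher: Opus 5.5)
Your proposal is correct and follows essentially the same argument as the paper: the convexity bound $\varepsilon^{\rho}\le 1-(1-\varepsilon)\rho$ on $[0,1]$, the one-step bound on $\sum_{r}\mu_{t+1}^{(r,k)}$, telescoping combined with $1+x\le e^{x}$, and then taking logarithms and summing over $k$. You are in fact more careful than the written proof, which leaves implicit two things your argument needs: the normalization $\sum_r \mu_0^{(r,k)}=1$, and the nonnegativity of the factors $1-(1-\varepsilon)\mathcal{J}_t^{(k)}$ required to multiply the one-step inequalities.
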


\begin{proof} Note that, for any $\varepsilon \in (0,1)$ and $\rho_t^{(r, k)} \in [0, 1]$, with$r=1,\, 2, \ldots, R$ and $k \in \{1,\, 2, \ldots, n\}$, we have the following inequality relation
\begin{align}
 \exp \bigl(\rho_t^{(r, k)} \log(\varepsilon)\bigr) \le 1 - (1- \varepsilon) \rho_t^{(r, k)}, \label{Eq3.28}
 \end{align}
due to the convexity argument for $\exp \bigl(\rho_t^{(r, k)} \log(\varepsilon)\bigr)\equiv \varepsilon^{\rho_t^{(r, k)}} \le 1 - (1 - \varepsilon)\rho_t^{(r, k)}$. Moreover, if we combine Equations~\eqref{Eq3.25} and \eqref{Eq3.28}, then we will have
\begin{align}
 \sum\nolimits_{r=1}^{R} \mu_{t+1}^{(r, k)}  &= \sum\nolimits_{r=1}^R \mu_{t}^{(r, k)} \exp \bigl(\rho_t^{(r, k)} \log(\varepsilon)\bigr) \notag\\
                                                                      &\le \sum\nolimits_{r=1}^R \mu_{t}^{(r, k)} \bigl(1 - (1- \varepsilon) \rho_t^{(r, k)} \bigr) \notag \\
                                                                      &= \sum\nolimits_{r=1}^R \mu_{t}^{(r, k)} -  (1- \varepsilon) \sum\nolimits_{r=1}^R \mu_{t}^{(r, k)} \rho_t^{(r, k)} \notag \\
                                                                     &= \left(\sum\nolimits_{r=1}^R  \mu_{t}^{(r, k)} \right) \left(1 - (1- \varepsilon) \sum\nolimits_{r=1}^R \omega_t^{(r, k)} \rho_t^{(r, k)}\right). \label{Eq3.29}
\end{align}
Furthermore, if we apply repeatedly for $t=0,1,2, \ldots, T-1$ to the above equation, then we have
{\small 
\begin{align}
\sum\nolimits_{r=1}^R \mu_t^{(r, k)} &\le \prod\nolimits_{t=0}^{T-1}\left(1 - (1- \varepsilon) \sum\nolimits_{r=1}^R \omega_t^{(r, k)} \rho_t^{(r, k)} \right) \notag \\
                                                                &\le \exp \left( - (1- \varepsilon) \sum\nolimits_{t=0}^{T-1}\sum\nolimits_{r=1}^R \omega_t^{(r, k)} \rho_t^{(r, k)} \right). \label{Eq3.30}
\end{align}}

Note that, due to the inequality relation $1 + x \le \exp(x)$ for all $x$, the right-hand side of the above equation satisfies the following inequality
\begin{align}
 1 &- (1- \varepsilon) \sum\nolimits_{t=0}^{T-1}\sum\nolimits_{r=1}^R \omega_t^{(r, k)} \rho_t^{(r, k)}  \notag\\
 & ~ \le \exp \left( - (1- \varepsilon) \sum\nolimits_{t=0}^{T-1}\sum\nolimits_{r=1}^R \omega_t^{(r, k)} \rho_t^{(r, k)} \right), \label{Eq3.31}
\end{align}
that further gives us the following result
\begin{align}
 \sum\nolimits_{r=1}^R  \mu_T^{(r, k)} \le \exp \left( - (1- \varepsilon) \sum\nolimits_{t=0}^{T-1}\sum\nolimits_{r=1}^R \omega_t^{(r, k)} \rho_t^{(r, k)} \right), \label{Eq3.32}
\end{align}
for each $k \in \{1,2, \ldots, n \}$. Moreover, if we take the natural logarithm on both sides of the above equation and sum over all $k$ (ranging from $k=1$ to $n$), i.e.,
\begin{align}
\sum\nolimits_{k=1}^n \left( \log \sum\nolimits_{r=1}^R  \mu_T^{(r, k)}\right) \le -(1- \varepsilon) \sum\nolimits_{k=1}^n \sum\nolimits_{t=0}^{T-1}\sum\nolimits_{r=1}^R \omega_t^{(r, k)} \rho_t^{(r, k)}. \label{Eq3.33}
\end{align}
Then, the statement in the proposition follows immediately. 
\end{proof}

Here, we remark that the dynamic allocation strategies together with the iterative updating schemes (i.e., Equations~\eqref{Eq3.24} and \eqref{Eq3.25}) provide conditions under which such an algorithm determines a set of mixing distribution strategies used for aggregating the current estimated parameters for the overall normalizing flows that ultimately leading to an optimal parameter estimate, i.e., as consensus solutions, supporting smooth interpolation of parameters in high-dimensions and further facilitating an extension mechanism for constructing flexible probability distributions.

Finally, it is worth remarking that the risk factor measures $\rho_t^{(r, k)}$, for $r=1$, $2$,\ldots, $R$, with $k \in \{1,2, \ldots, n \}$, tend to zero as $t \to \infty$. Moreover, based on the averaged (aggregated) values of the current estimated parameters, the updating schemes in Equations~\eqref{Eq3.18}  and \eqref{Eq3.19} can be modified, for $t=1$, $2$, \ldots, $T-1$, as follows 
\begin{align}
\alpha_{t+1}^{(r, k)} = \bar{\alpha}_{t}^{(k)} - \delta_t^{(r, k)} &\biggl(\nabla_{\alpha^{(r, k)}} \mathcal{L}^{(r, k)}\bigl(\alpha^{(r, k)}, \gamma^{(r, k)} \bigr) \notag\\
                                                               & \quad \quad+ \nu_{\alpha} \nabla_{\alpha^{(r, k)}}\mathcal{R}\bigl(\alpha^{(r, k)}, \gamma^{(r, k)}\bigr) \biggr) \biggl \vert_{\bigl(\bar{\alpha}_t^{(k)}, \bar{\gamma}_t^{(k)} \bigr)} \label{Eq3.34}
\end{align}
and
\begin{align}
\gamma_{t+1}^{(r, k)} = \bar{\gamma}_{t}^{(r, k)} - \delta_t^{(r, k)} &\biggl(\nabla_{\gamma^{(r, k)}} \mathcal{L}^{(r, k)}\bigl(\alpha^{(r, k)}, \gamma^{(r, k)} \bigr) \notag\\
                                                               & \quad \quad+ \nu_{\gamma} \nabla_{\gamma^{(r, k)}}\mathcal{R}\bigl(\alpha^{(r, k)}, \gamma^{(r, k)}\bigr) \biggr) \biggl \vert_{\bigl(\bar{\alpha}_t^{(k)}, \bar{\gamma}_t^{(k)} \bigr)}, \label{Eq3.35}
\end{align}
where the current aggregated estimated parameters $\bar{\alpha}_t^{(k)}$ and $\bar{\gamma}_t^{(k)}$, for $k=1$, $2$, \ldots, $n$, are, respectively, given by
\begin{align*}
\bar{\alpha}_t^{(k)} =  \sum\nolimits_{r=1}^R \omega_t^{(r, k)} \alpha_{t}^{(r, k)} \quad \text{and} \quad \bar{\gamma}_t^{(k)} =  \sum\nolimits_{r=1}^R \omega_t^{(r, k)}\gamma_{t}^{(r, k)}. 
\end{align*}
Note that estimated parameters $ \bigl\{\bar{\phi}_{T}^{(k)} \bigr\}_{k=1}^n = \bigl\{\bar{\alpha}_{T}^{(k)}, \bar{\gamma}_{T}^{(k)}\bigr\}_{k=1}^n$ tend to the optimal estimated parameters $\phi^{\ast}$ for the overall aggregated normalizing flows.

\subsection{Further remarks: Inference and quantifying model performances} \label{S3.4}
Note that the task of inference may notably vary across different scientific fields by what is being inferred. However, they all share the same principle, i.e., given real observation data $y_{\rm obs}$ from a certain physical system or experimental process, then the goal is to make inferring about $\theta$ in the parametric setting, with the support of $\Theta \subset \mathbb{R}^{d_{\theta}}$, or extending the inference tasks to include a probabilistic notion of uncertainty and quantifying model performances. In all cases, the critical component for the inference task is the likelihood function which is typically not tractable, since it is implicitly defined by the forward computational simulation that requires all possible execution traces of the simulator. In what follows, we provide a few remarks with regard to inference and quantifying model performances.
 
\subsubsection*{Inference} In what follows, having estimated the optimal parameters $\phi^{\ast} = \bigl\{ \bar{\alpha}_{T}^{(k)},\bar{\gamma}_{T}^{(k)} \bigr\}_{k=1}^n$ for the aggregated overall normalizing flows, where the later serve as a proxy for the true likelihood function, i.e.,
\begin{align}
\mathcal{L}_{\rm Approx} \bigl(y_{\rm obs} \vert \theta, \phi^{\ast} \bigr) \approx p\bigl(y_{\rm obs} \vert \theta\bigr). \label{Eq3.36}
\end{align}
Then, we will proceed with the problem of inferring the model parameters $\theta$ consistent with respect to real experimental observation datasets $y_{\rm obs} = \bigl\{y_{\rm obs}^{(j)} \bigr\}_{j=1}^N$, where each $y_{\rm obs}^{(j)}$, for $j=1$, $2$, \ldots, $N$, represents an experimental measurement. Note that, from Bayes' theorem, the probability distribution for $\theta$ given $y_{\rm obs}$, i.e., the posterior distribution $p\bigl(\theta \vert y_{\rm obs}\bigr)$, is given by
\begin{align}
p\bigl(\theta \vert y_{\rm obs}\bigr) &= \frac{p\bigl( y_{\rm obs} \vert \theta \bigr) \pi\bigl(\theta\bigr)}{p\bigl(y_{\rm obs} \bigr)} \notag \\
                                                             &\propto p\bigl(y_{\rm obs} \vert \theta\bigr) p\bigl(\theta\bigr),  \label{Eq3.37}
\end{align}
where $\pi\bigl(\theta\bigr)$ is the prior distribution over the parameters; while $p\bigl(y_{\rm obs} \bigr)$ is the evidence, which is independent of the model parameters $\theta$, and thus allowing us to rewrite the posterior distribution up to a proportionality constant as above. Furthermore, if we assume that the observation datasets are conditionally independent, given the model parameters $\theta$, then the total likelihood will be factorized as follows
\begin{align}
p\bigl(y_{\rm obs} \vert \theta\bigr) = \prod\nolimits_{j=1}^N p\bigl(y_{\rm obs}^{(j)} \vert \theta\bigr).  \label{Eq3.38}
\end{align}
Moreover, we can replace each of the likelihood terms with that of the output from the overall aggregated normalizing flows, with the optimal estimated parameters $\phi^{\ast}$, and conditioned on the model parameters $\theta$. That is, setting $p\bigl(y_{\rm obs} \vert \theta\bigr) \approx \mathcal{L}_{\rm Approx} \bigl(y_{\rm obs} \vert \theta, \phi^{\ast} \bigr)$. Note that, if we substitute this into the factorized likelihood function. Then, the posterior distribution up to proportionality is given by
\begin{align}
p\bigl(\theta \vert y_{\rm obs}\bigr) \propto \left (\prod\nolimits_{j=1}^N \mathcal{L}_{\rm Approx} \bigl(y_{\rm obs}^{(j)}\vert \theta, \phi^{\ast} \bigr) \right)\pi\bigl(\theta\bigr).  \label{Eq3.39}
\end{align}
Then, our task will include to use appropriate sampling algorithms such as the Metropolis-Hastings MCMC sampling or the adaptive multiple importance sampling algorithms, that increase our knowledge most or allow us to explore the approximate likelihood function $\mathcal{L}_{\rm Approx} \bigl(y_{\rm obs} \vert \theta, \phi^{\ast} \bigr)$ with respect to the model parameters $\theta$ consistent with the real observation datasets $y^{\rm obs}$ (e.g., see \cite{r29}-\cite{r33} for additional discussions). Here, we remark that the choice of appropriate algorithms as well as a set of suitable proposal distributions are consequential for the convergence of such algorithms.

\subsubsection*{Quantifying model performances} For evaluating the model performances, we can use typical statistical distance measures, such as Wasserstein, Cram\'{e}r-von Mises, and Kolmogorov-Smirnov distances, that will allow us to quantify the differences between probability distributions during model parameters estimation step. Note that, if we use the Cram\'{e}r-von Mises distance measure as a sub-problem optimization criterion during model parameters estimation. Then, due to its balanced sensitivity to both global distribution shifts and local deformations, such a distance measure will provide us the most robust translation for quantifying the model performance, i.e., a balanced effect when evaluating the $L_2$-norm distance measures across multiple calibration sources with diverse characteristics.

\end{document}